\numberwithin{equation}{section}
\newtheorem{theorem}{Theorem} 
\newtheorem{corollary}[theorem]{Corollary}
\newtheorem{lemma}[theorem]{Lemma}
\theoremstyle{remark}
\def\be{\begin{equation}}
\def\ee{\end{equation}}
\def\ve{\varepsilon}
\begin{document}
\setlength{\parskip}{2pt}

\title 
[A lower bound for the Lyapounov exponents]
{A lower bound for the Lyapounov exponents of the random Schr\"odinger operator on a strip}
\author{J.Bourgain}

\date{\today}

\begin{abstract}
We consider the random Schr\"odinger operator on a strip of width $W$, 
assuming the site distribution of bounded density.
It is shown that the positive Lyapounov exponents satisfy a 
lower bound roughly exponential in $-W$ for $W\to\infty$.
The argument proceeds directly by establishing Green's function decay, but does not appeal to Furstenberg's random matrix theory on the strip.
One ingredient involved is the construction of `barriers' using the RSO theory on $\mathbb Z$.
\end{abstract}
\maketitle

\section
{Introduction}

We consider the classical one-dimensional Anderson model on a strip of width $W$, thus
\be\label{0.1}
H=\lambda V+\Delta
\ee
with $\Delta$, the lattice Laplacian on $\mathbb Z\times\mathbb Z_W$, $\mathbb Z_W=\mathbb Z/W\mathbb Z$
(periodic boundary conditions) and $V=(V_{ij})_{i\in\mathbb Z, j\in\mathbb Z_W}$
 a random potential with IID site distribution.
It is well-known that for any $\lambda\not= 0$, this model exhibits Anderson localization.
The non-perturbative approach is provided by Furstenberg's random matrix product theory, applied to the underlying
transfer operators in the symplectic group $Sp(2W)$; cf. \cite{B-L}.
The argument is non-quantitative, in the sense that no explicit lower bounds on the $W$ positive Lyapounov exponents
is provided.
Hence our concern in this Note is to obtain a lower bound in terms of $W$.
If $\lambda$ is taken sufficiently small (depending on $W$) in \eqref {0.1}, a very explicit analysis based on an extension 
of the Figotin-Pastur method appears in \cite {S-B}, leading to exact formulas for the Lyapounov exponents.
Unfortunately, this technique seems restricted to the perturbative setting.
Related work for random band matrices in \cite {S} leads to upper bounds on the localization length of the form
$W^C$ (the conjecture in this setting is a localization length $O(W^2)$, which seems unproven at this point).
Of course, the Schr\"odinger model \eqref {0.1} is much `sparser' and there does not appear to be an easy way to adjust
the technique from \cite{S} to our setting.
We settle here for the modest goal of establishing an explicit upper bound on the localization length for random $SO$ on
a $W$-strip, assuming for simplicity that the site distribution of the potential has a bounded density.
It is possible to adjust the argument to treat other (continuous) densities, but definitively the Bernoulli model is not
captured (mainly due to the lack of a quantitative Wegner estimate in the Bernoulli-setting).
Our estimate is roughly exponential in $W$ (while one could again conjecture that a powerlike behavior is the true answer).
We will not use Furstenberg theory, except for $W=1$ (see Lemma 2, which is a crucial ingredient).

We refer the reader in particular \cite{K-L-S1}, \cite{K-L-S2} for treatments of
localization and density of states for the Anderson model on the strip and
\cite{B-L} as reference work.

\noindent
{\bf Acknowledgement.} The author is grateful to A.~Klein for several
stimulating discussions on the issue discussed in this paper. 
\bigskip

\section
{Use of the Shur complement formula}

In what follows, we make essential use of the following principle
\begin{lemma}\label{Lemma1}
Let $T$ be selfadjoint with finite index set $\Omega$.

Let $\Omega=\Omega_1\cup\Omega_2$ be a decomposition and set $T_i={\Omega_i}TR_{\Omega_i} (i=1, 2)$.
Assume $T_2$ invertible.

Let $D_V$ be the diagonal operator defined by
$$
D_V=\sum_{i\in\Omega_1} V_i e_i\otimes e_i
$$
with $V_i\in\mathbb R$ IID with bounded density distribution. Denote
\be\label{1.1}
T_V=D_V +T.  \ee 
Then 
\be\label{1.2} \mathbb P_V[\Vert R_{\Omega_1} T_V^{-1} R_{\Omega_1}\Vert >\lambda] \lesssim |\Omega_1| \lambda^{-1}.  
\ee 
\end{lemma}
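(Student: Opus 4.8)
The plan is to use the Schur complement (Feshbach) formula for the block decomposition $\Omega=\Omega_1\cup\Omega_2$ to turn the corner $R_{\Omega_1}T_V^{-1}R_{\Omega_1}$ into the inverse of a single self-adjoint operator on $\ell^2(\Omega_1)$ that is affine and monotone in each $V_i$, and then run a Chebyshev/Wegner argument fed by the one-dimensional spectral averaging lemma for rank-one perturbations. Concretely, write $T_1=R_{\Omega_1}TR_{\Omega_1}$, $T_2=R_{\Omega_2}TR_{\Omega_2}$, $B=R_{\Omega_1}TR_{\Omega_2}$, $B^{*}=R_{\Omega_2}TR_{\Omega_1}$, and set
\be
S=T_1-BT_2^{-1}B^{*},\qquad M_V=D_V+S .
\ee
Since $T$ is selfadjoint and $T_2$ is invertible (so $T_2^{-1}$ is selfadjoint), $S$ is selfadjoint on $\ell^2(\Omega_1)$, it does \emph{not} depend on $V$, and $M_V=S+\sum_{i\in\Omega_1}V_i\,e_i\otimes e_i$ is affine and monotone non-decreasing in each $V_i$. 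The Schur complement identity gives: whenever $T_2$ is invertible, $T_V$ is invertible iff $M_V$ is invertible, in which case $R_{\Omega_1}T_V^{-1}R_{\Omega_1}=M_V^{-1}$.

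\textbf{Step 2 (reduction to eigenvalue counting).} Put $I=[-\lambda^{-1},\lambda^{-1}]$ and let $N_I(M_V)=\Tr\mathbf 1_I(M_V)$ be the number of eigenvalues of $M_V$ in $I$. If $T_V$ is not invertible then $M_V$ is not invertible, so $0\in\Spec(M_V)\cap I$; if $T_V$ is invertible then $\Vert R_{\Omega_1}T_V^{-1}R_{\Omega_1}\Vert=\Vert M_V^{-1}\Vert=\mathrm{dist}(0,\Spec M_V)^{-1}$, so $\Vert M_V^{-1}\Vert>\lambda$ forces $\Spec(M_V)\cap(-\lambda^{-1},\lambda^{-1})\ne\emptyset$. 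In either case the event in \eqref{1.2} lies in $\{N_I(M_V)\ge1\}$, and Chebyshev's inequality yields
\be
\mathbb P_V\big[\Vert R_{\Omega_1}T_V^{-1}R_{\Omega_1}\Vert>\lambda\big]\le \mathbb P_V\big[N_I(M_V)\ge1\big]\le \mathbb E_V\big[\Tr\mathbf 1_I(M_V)\big].
\ee

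\textbf{Step 3 (spectral averaging).} Expand the trace in the canonical basis: $\Tr\mathbf 1_I(M_V)=\sum_{i\in\Omega_1}\langle e_i,\mathbf 1_I(M_V)e_i\rangle$. Fix $i$, freeze $(V_j)_{j\ne i}$, and write $M_V=A+V_i\,e_i\otimes e_i$ with $A$ selfadjoint and independent of $V_i$. The rank-one spectral averaging lemma gives $\int_{\mathbb R}\langle e_i,\mathbf 1_I(A+t\,e_i\otimes e_i)e_i\rangle\,dt\le|I|$, hence, with $\rho$ the common density of the $V_i$ and $C_0=\Vert\rho\Vert_\infty$,
\be
\mathbb E_{V_i}\big[\langle e_i,\mathbf 1_I(A+V_i\,e_i\otimes e_i)e_i\rangle\big]=\int_{\mathbb R}\langle e_i,\mathbf 1_I(A+t\,e_i\otimes e_i)e_i\rangle\,\rho(t)\,dt\le C_0\,|I|.
\ee
This bound is uniform in the frozen variables, so it survives the remaining expectation; summing over $i\in\Omega_1$ and using $|I|=2\lambda^{-1}$ gives $\mathbb E_V[\Tr\mathbf 1_I(M_V)]\le 2C_0\,|\Omega_1|\,\lambda^{-1}$, which is \eqref{1.2}.

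\textbf{Main obstacle.} Steps 1 and 2 are bookkeeping; the only analytic content is the averaging estimate $\int_{\mathbb R}\langle e_i,\mathbf 1_I(A+t\,e_i\otimes e_i)e_i\rangle\,dt\le|I|$. I would derive it from the Aronszajn--Krein formula $\langle e_i,(A+t\,e_i\otimes e_i-z)^{-1}e_i\rangle=F(z)/(1+tF(z))$ with $F(z)=\langle e_i,(A-z)^{-1}e_i\rangle$, which shows $t\mapsto\langle e_i,(A+t\,e_i\otimes e_i-z)^{-1}e_i\rangle$ is a Herglotz function of $t$ for $z\in\mathbb C\setminus\mathbb R$; passing to the real axis, the spectral measures $\mu_t$ of $e_i$ with respect to $A+t\,e_i\otimes e_i$ satisfy $\int_{\mathbb R}\mu_t(\mathcal B)\,dt=|\mathcal B|$ for every Borel $\mathcal B\subseteq\mathbb R$ (Javrjan--Kotani), which is exactly the claim. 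Note that only a bound on the density enters, so the i.i.d.\ hypothesis can be relaxed to independent coordinates with uniformly bounded densities.
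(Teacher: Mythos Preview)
Your proof is correct and follows exactly the paper's approach: apply the Schur complement formula to reduce $R_{\Omega_1}T_V^{-1}R_{\Omega_1}$ to $(D_V+A)^{-1}$ with $A$ selfadjoint and $V$-independent, then invoke the Wegner-type bound $\mathbb P_V[\text{dist}(\sigma(D_V+A),0)<\kappa]\lesssim \kappa|\Omega_1|$. The paper simply states this last estimate as \eqref{1.4} without justification, whereas you supply the standard spectral-averaging proof via rank-one perturbation; this is a faithful elaboration rather than a different route.
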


\begin{proof} By the Shur complement formula 
\begin{align}\label{1.3} R_{\Omega_1}T_V^{-1} R_{\Omega_1}&= (D_V+T_1 - R_{\Omega_1} TR_{\Omega_2} T_2^{-1} R_{\Omega_2} TR_{\Omega_1})^{-1}
\nonumber\\ 
& = (D_V+A)^{-1} 
\end{align} 
and 
\be\label{1.4} 
\mathbb P_V[\text{dist\,} \big(\sigma(D_V+A), 0\big) <\kappa]\lesssim \kappa|\Omega_1|.  
\ee 
The claim follows.  
\end{proof} 

\bigskip 
\section {Construction of barriers} 

Let $W\geq 1$ be an integer and consider SO of the form
$$ 
H= V+ \Delta 
$$ 
on the band $\mathbb Z\times\mathbb Z_W$, $\mathbb Z_W=\mathbb Z/{W\mathbb Z}$ (i.e. periodic $bc$) with $\Delta$ the nearest neighbor 
Laplacian on $\mathbb Z\times\mathbb Z_W$ and $V$ a random potential 
$V= (V_{ij})_ {i\in\mathbb Z, j\in\mathbb Z_W}, V_{ij} $ IID.
If $I\subset\mathbb Z$ is an interval, $H_I$ denotes the corresponding restriction of $H$.  

\begin{lemma}\label{Lemma2} 
Let $I$ be an interval of size 
\be\label{2.1} 
N>C[\log (1+W)]^2.  
\ee 
Fix an energy $E$.  
Then, with above notations, the properties \be\label{2.2}
 \Vert(H_I-E)^{-1}\Vert < e^{\sqrt N} 
\ee
 and 
\be\label{2.3} |(H_I-E)^{-1} \big((i, j), (i', j')\big)|< e^{-cN} \text { for } i, i'\in I, |i-i'|>\frac N{10} 
\text { and } j, j'\in \mathbb Z_W 
\ee 
hold with probability at least $C^{-N^2W}$.  
\end{lemma}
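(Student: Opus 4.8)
The plan is to force the potential on the whole strip $I\times\mathbb Z_W$ to be close to a \emph{column--constant} configuration, one depending only on the longitudinal coordinate $i$. For such a configuration the width-$W$ operator decouples, via the Fourier transform in the $\mathbb Z_W$-variable, into $W$ one-dimensional Schr\"odinger operators, for each of which $\eqref{2.2}$--$\eqref{2.3}$ are supplied by the RSO theory on $\mathbb Z$ (the case $W=1$); the probabilistic cost of the pinning is exactly what produces the factor $C^{-N^2W}$. (For $W=1$ the statement is just the classical one-dimensional box estimate, with much better probability, so assume $W\ge2$.)

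In detail, pick a deterministic sequence $w=(w_i)_{i\in I}$ with values in a fixed non-degenerate interval on which the site density is bounded below, and let $H^{\ast}$ be the strip operator with potential $V_{ij}=w_i$. Since $w$ is independent of $j$, the transverse Laplacian and the potential are simultaneously diagonalized by $\mathcal F_W$, so
\[
H^{\ast}_I-E\;=\;\mathcal F_W^{-1}\Big(\bigoplus_{k\in\mathbb Z_W}L^{(k)}_I\Big)\mathcal F_W,\qquad
L^{(k)}_I:=\Delta_I+\mathrm{diag}_i(w_i)+2\cos\tfrac{2\pi k}{W}-E,
\]
each $L^{(k)}_I$ being a one-dimensional finite-interval Schr\"odinger operator at the shifted energy $E-2\cos\frac{2\pi k}{W}\in[E-2,E+2]$. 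Choose $w$ to be a typical realization of the one-dimensional random potential restricted to that interval. By Furstenberg's theorem its Lyapounov exponent is $\ge\gamma_0>0$ uniformly on the compact window $[E-2,E+2]$; combined with the large-deviation estimate for products of $1$-$d$ transfer matrices and with the one-dimensional Wegner bound (Lemma~\ref{Lemma1} with $\Omega_2=\emptyset$), the standard ``good box'' estimate in one dimension gives, for each fixed $k$,
\[
\big\|(L^{(k)}_I)^{-1}\big\|<\tfrac12 e^{\sqrt N}\qquad\text{and}\qquad
\big|(L^{(k)}_I)^{-1}(i,i')\big|\le e^{\sqrt N}\,e^{-\gamma_0|i-i'|/2},
\]
with probability $\ge1-CNe^{-\sqrt N}$. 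A union bound over the $W$ values of $k$ makes all of these hold simultaneously as soon as $CNW\,e^{-\sqrt N}<1$, i.e.\ as soon as $\sqrt N\gtrsim\log(1+W)$ --- and this is precisely where the hypothesis $N>C[\log(1+W)]^2$ enters. Hence such a $w$ exists; fix it. Writing each entry of $(H^{\ast}_I-E)^{-1}$ as an average over $k$ of the entries $(L^{(k)}_I)^{-1}(i,i')$ yields $\eqref{2.2}$ and $\eqref{2.3}$ for the deterministic operator $H^{\ast}_I$, with norm $<\frac12 e^{\sqrt N}$ and off-diagonal decay $e^{-cN}$ on $|i-i'|>N/10$.

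It remains to pass from $H^{\ast}_I$ to the true random $H_I$. Put $\eta:=e^{-N}$ and let $\mathcal E=\{\,|V_{ij}-w_i|<\eta\ \forall(i,j)\in I\times\mathbb Z_W\,\}$; by the hypothesis on the site distribution $\mathbb P[\mathcal E]\ge(c\eta)^{NW}\ge C^{-N^2W}$, and on $\mathcal E$ the diagonal operator $D:=V-w$ satisfies $\|D\|<\eta$. Then $\|D(H^{\ast}_I-E)^{-1}\|<\eta e^{\sqrt N}<\tfrac12$, so $H_I-E$ is invertible with $\|(H_I-E)^{-1}\|<e^{\sqrt N}$, which is $\eqref{2.2}$. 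For $\eqref{2.3}$ expand
\[
(H_I-E)^{-1}=\sum_{n\ge0}(-1)^n\big((H^{\ast}_I-E)^{-1}D\big)^n(H^{\ast}_I-E)^{-1}
\]
and estimate the entry between sites with $|i-i'|>N/10$ term by term: since $D$ is diagonal it does not spread the column index, so in each chain of $n+1$ Green's-function hops one hop must move the column index by more than $N/\bigl(10(n+1)\bigr)$; for bounded $n$ that hop carries a factor $e^{-c'N}$ coming from the decay of $(H^{\ast}_I-E)^{-1}$, while the other hops and the $n$ copies of $D$ contribute only $(\eta e^{\sqrt N})^n$, and the tail over large $n$ is summed using $\|((H^{\ast}_I-E)^{-1}D)^n(H^{\ast}_I-E)^{-1}\|\le e^{\sqrt N}(\eta e^{\sqrt N})^n$. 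With $\eta=e^{-N}$ every contribution is $\le e^{-cN}$, giving $\eqref{2.3}$.

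The one genuinely structural step is the column--constant ansatz, which both decouples the strip into $W$ one-dimensional problems and keeps the pinning cheap enough to reach probability $C^{-N^2W}$. The two quantitative points to be careful about are: the one-dimensional ``good box'' estimate must fail with probability only $\lesssim e^{-\sqrt N}$, uniformly enough in the energy that the union over the $W$ shifted energies $E-2\cos\frac{2\pi k}{W}$ is harmless --- this is what forces the scale $N\gtrsim[\log(1+W)]^2$; and the resolvent expansion must transport the exponential off-diagonal decay against a resolvent of size $e^{\sqrt N}$, which is what forces $\eta\sim e^{-N}$ and hence the probability $C^{-N^2W}$. I expect this last propagation-of-decay estimate to be the main thing to get right; the rest is either the decoupling identity above or standard one-dimensional RSO input.
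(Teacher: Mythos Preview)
Your proposal is correct and follows essentially the same route as the paper: the column-constant ansatz $V_{ij}=w_i$, the Fourier decoupling in $\mathbb Z_W$ into $W$ one-dimensional operators at shifted energies $E-2\cos\frac{2\pi k}{W}$, the union bound over $k$ explaining the threshold $N>C[\log(1+W)]^2$, and the final perturbation by $\eta=e^{-N}$ giving the probability $C^{-N^2W}$. Your Neumann-series treatment of the perturbation step is more elaborate than necessary---since $\|(H_I-E)^{-1}-(H^{\ast}_I-E)^{-1}\|\lesssim e^{2\sqrt N}e^{-N}\ll e^{-cN}$, the off-diagonal bound \eqref{2.3} transfers entrywise directly---but it is not wrong.
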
 

This statement is also valid in the Bernoulli case.  

\begin{proof} 
The main idea is to deduce the statement from the case $W=1$.  Let $I=[0, N-1]$.  Let $(v_i)_{i\in I}=v$ be assignments of the potential and set 
\be\label{2.4} 
V_{ij}=v_i \text { for } i\in I, j\in\mathbb Z_W.  
\ee 
Considering the SO $h$ on $\mathbb Z$ with potential $(V_i)_{i\in\mathbb Z}$, for any given energy $E'\in\mathbb R$, the restricted Green's function $(h_I-E')^{-1}$ will satisfy bounds 
\be\label{2.5} 
\Vert (h_I-E')^{-1} \Vert < e^{\sqrt N} 
\ee 
and 
\be\label{2.6} 
|(h_I-E')^{-1} (i, i')|<e^{-cN}\text { for } |i-i'|>\frac N{10} 
\ee 
excluding a set of $(v_i)_{i\in I}$ of measure at most $e^{-c\sqrt N}$.  
We assume here $N$ sufficiently large.  
The latter statement follows from the transfer matrix approach and is equally valid for Bernoulli-distributions.  

Consider next the equation 
\be\label{2.7} 
(H_I-E)\xi =\eta 
\ee 
with $\xi =\sum_{i\in I} \xi_ie_i, \eta =\sum_{i\in I}\eta_i e_i$ and $V$ satisfying \eqref{2.4}.  Thus 
\be\label {2.8} 
(v_i-E)\xi_{ij} +\xi_{i-1, j}+\xi_{i+1, j}+\xi_{i, j-1}+\xi_{i, j+1} =\eta_{i, j} \text { for } i\in I, j\in\mathbb Z_W 
\ee 
and Dirichlet bc in $i$.  

Denote $e(\theta)= e^{2\pi i\theta}$.  
Define for $\theta\in \{\frac wW; 0\leq w< W\}$ 
$$ 
\hat\xi_i(\theta) =\sum_{j\in\mathbb Z_W} e(j\theta)\xi_{i, j} 
$$ 
and similarly $\hat\eta_i(\theta)$.  
It follows thus from \eqref{2.8} that 
$$ 
(V_i-E) \hat\xi_i\theta)+\hat\xi_{i-1} (\theta) +\hat\xi_{i+1}(\theta)+2\cos 2\pi\theta \ \hat{\xi_i}(\theta) 
=\hat\eta_i(\theta) \text { for } i\in I.  
$$ 
Hence 
\be\label{2.9} 
(h_I-E') \hat\xi(\theta)=\hat\eta(\theta) 
\ee 
with $E'=E-2\cos 2\pi \theta$.  

We choose $(v_i)_{i\in I}$ in \eqref{2.4} as to ensure \eqref {2.5}, \eqref {2.6} for \hfill\break
$E'\in E+\{2\cos2\pi\theta; \theta\in \mathbb Z_W\}$.  
This holds indeed with large probability in $v$, if we assume 
\be\label{2.10} 
N> C(\log W)^2.  
\ee 
We verify properties \eqref{2.2} and \eqref{2.3}.  

Let $\Vert\eta\Vert =1$ in \eqref{2.7}.  
It follows from \eqref{2.5}, \eqref{2.9} that for $\theta \in\mathbb Z_W$ 
$$ \Vert \hat\xi(\theta)\Vert \leq \Vert (h_I-E')^{-1}\Vert \ 
\Vert \hat\eta(\theta) \Vert< e^{\sqrt N}\Vert\hat\eta(\theta)\Vert.  
$$ 
Squaring both sides and averaging over $\theta\in\mathbb Z_W$ implies by Parseval that 
$$ 
\Vert\xi\Vert^2 < e^{2\sqrt N} \Vert\eta\Vert^2, \Vert\xi\Vert< e^{\sqrt N} $$ hence \eqref{2.2}.  

Next, take $\eta= e_{i, j}, \xi =(H_I-E)^{-1}\eta$. 
Thus 
$$ \xi_{i', j'} =\langle (H_I-E)^{-1} \eta, e_{i', j'}\rangle.  
$$ 
Again by \eqref{2.9}, for each $\theta\in\mathbb Z_W$ 
$$ \begin{aligned} 
|\hat\xi_{i'} (\theta)| &=|\langle (h_I-E')^{-1} \hat\eta (\theta), e_{i'}\rangle|\\ 
&\leq |(h_I-E')^{-1} (i, i')|< e^{-cN} \end{aligned}.
$$ 
Therefore clearly 
$$ 
|\xi_{i', j'}|< e^{-cN} 
$$ 
proving \eqref{2.3}.  

Recall that $V=(V_{ij})_{i\in I, j\in\mathbb Z_W}$ was taken to satisfy \eqref{2.4}, $(v_i)_{i\in I}$ 
taken in a set of measure at least $\frac 12$.  Clearly \eqref{2.2} and elementary perturbation theory shows that 
assumption \eqref{2.4} may be weakened to 
\be\label{2.11} |V_{ij}-v_i|< e^{-N} \text { for } i\in I, j\in\mathbb Z_W 
\ee 
and this property will hold with measure at least $C^{-N^2W}$.  
Lemma \ref{Lemma2} follows.  
\end{proof} 
\bigskip 

\section {Restricted Green's function estimates} 

Let $H$ be as in \S2 and assume the potential distribution with bounded density for simplicity.  

Fix $E$ and denote $G_I=(H_I-E)^{-1}$.  
The basic construction proceeds as follows.  

Fix $M>(\log W)^2$ and let 
\be\label{3.1} 
N> C^{M^2W} 
\ee 
be a multiple of $M$.  

Consider the intervals $I_\alpha= ]\alpha M, (\alpha+1) M[ \subset I=[0, N]$. 
 \vskip.2 true in 

\centerline{
\begin{minipage}{2.5 in}
\centerline
{\hbox{\includegraphics[width = 5.5in]{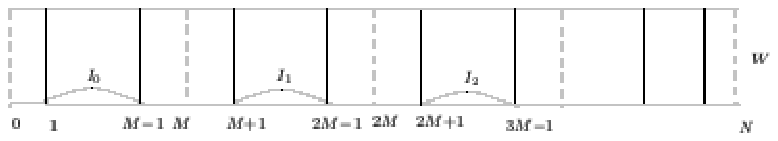}}}
\end{minipage}}


 \vskip.2 true in 

Say that $\alpha$ is good provided 
\be\label{3.2} 
\Vert G_{I_\alpha}\Vert< e^{\sqrt M} 
\ee 
and 
\be\label{3.3} 
\Vert P_{\{\alpha M+1\}} G_{I_\alpha} P_{\{(\alpha+1)M-1\}}\Vert < e^{-cM} 
\ee 
with $P_{\{i\}}$ the projection on $[e_{i, j}; j\in\mathbb Z_W]$.

According to Lemma \ref{Lemma2}, $\alpha$ will be good with probability at least $C^{-M^2W}$.
Note that this event only depends on the variables $(V_{i, j})_{i\in I_\alpha , j\in\mathbb Z_W}$.
Hence, by our choice of $N$, there will be at least $R=[e^{\frac c{10} M}]$ good $\alpha$'s with probability $> 1-e^{-\sqrt N}$.
This statement only involves the variables 
$ (V_{i, j})_{\stackrel {i\not= 0 (\text{mod\,} M)}{j\in\mathbb Z_W}}$ 
which we {\it fix}.
Denote $I_1, \ldots, I_R, I_r=] k_rM, (k_1+1)M[$
these good intervals, that will be used as barriers.

From the resolvent identity
\be\label{3.4}
\Vert P_{\{0\}} G_{[0, N]} P_{\{N\}}\Vert\leq
\Vert P_{\{0\}} G_{[0, (k_1+1)M-1]} P_{\{ (k_1+1)M-1\}} \Vert \ \Vert P_{\{(k_1+1))M\}} G_{[0, N]} P_{\{N\}}\Vert
\ee
and again by the resolvent identity and \eqref{3.3},
the first factor on the rhs of \eqref{3.4} may be bounded by
\begin{align}\label{3.5}
&\Vert P_{\{0\}} G_{[0, (k_1+1)M-1]} P_{\{k_1 M\}}\Vert \ \Vert P_{\{k_1 M+1\}} G_{I_1} P_{\{(k_1+1)M-1\}}\Vert\nonumber\\
& < e^{-cM}\Vert P_{\{0\}} G_{[0, (k_1+1)M-1]} P_{\{k_1 M\}}\Vert
\end{align}

The factor
\be\label{3.6}
\Vert P_{\{0\}} G_{[0, (k_1+1)M-1]} P_{\{k_1M\}}\Vert
\ee
will be bounded by the Shur complement formula, exploiting the variables
\be\label{3.7}
(V_{ij}) _{\stackrel{i=0, k_1M}{j\in\mathbb Z_W}}.
\ee

We apply Lemma \ref{Lemma1} with $\Omega=[0, (k_1+1)M[\times \mathbb Z_W$ and $\Omega_1=\{0, k_1M\}\times \mathbb Z_W$.
Hence, by \eqref{1.2}, we may ensure that
\be\label{3.8}
\eqref{3.6} \leq \Vert P_{\Omega_1} G_{[0, (k_1+1)M-1]} P_{\Omega_1}\Vert < e^{\frac c2 M}
\ee
excluding a set of measure at most $e^{-\frac c3M}$ in $(V_{i, j})_{i\equiv 0(\mod M)}$.

From \eqref{3.4}, \eqref{3.5}, \eqref{3.8}, we obtain then
\be\label{3.9}
\Vert P_{\{0\}} G_{[0, n]} P_{\{N\}} \Vert< e^{-\frac c2M} \Vert P_{\{ (k_1+1)M\}} G_{[0, N]} P_{\{N\}}\Vert.
\ee
Repeating the argument considering the next barrier $I_2$, which
$$
\Vert P_{\{(k_1+1)M\}} G_{[0, N]} P_{\{N\}}\Vert \leq
\Vert P_{\{(k_1+1)M\}} G_{[0, (k_2+1)M-1]} P_{\{(k_2+1)M-1\}} \Vert \ \Vert P_{\{(k_2+1)M\}} G_{[0, n]} P_{\{N\}}\Vert
$$
and
$$
\Vert P_{\{(k_1+1)M\}} G_{[0, (k_2+1)M-1]} P_{\{(k_1+1)M\}}\Vert \leq
e^{-cM} \Vert P_{\{(k_1+1)M\}} G_{[0, (k_2+1)M-1]} P_{\{ k_2M\}}\Vert
$$
etc.

For the last factor $\Vert P_{\{(k_R+1)M\}} G_{[0, n]} P_{\{N\}}\Vert$, apply again Lemma \ref{Lemma1} in order
to get a bound by $e^M$.

The above iteration shows that we may estimate
\be\label{3.10}
\Vert P_{\{0\}} G_{[0, N]} P_{\{N\}}\Vert < e^{-\frac c2RM}
\ee
by exclusion in the $(V_{i, j})_{\stackrel {i\equiv 0 (\text{mod\,} M)}{j\in \mathbb Z_W}} $ variable of a set of measure at most
\be\label{3.11}
(R+1) e^{-\frac c3M}< e^{-\frac c5 M}
\ee
by our choice of $R$.

Taking $M > (\log W)^2, N=C^{M^2W}$, we proved the following.

\begin{lemma}\label{Lemma3}
Let $H$ be as in \S2 with potential distribution of bounded density. Let
\be\label{3.12} 
N > C^{W(\log W)^4}
\ee
and $I\subset\mathbb Z$ an interval of size $N, I=[a, b]$.

Fix $E$. Then,
\be\label{3.13}
\Vert P_a(H_I-E)^{-1} P_b\Vert< e^{-\exp(\frac {\log N}W)^{\frac 12}}
\ee
outside an exceptional  set of measure  at most $C e^{-c(\frac{\log N} W)^{\frac 12}}$ 
\end{lemma}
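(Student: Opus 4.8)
The plan is to make quantitative the barrier-iteration sketched in the paragraphs preceding the statement, choosing the scales $M$ and $N$ so that the probabilistic and analytic bounds close. First I fix $M$ with $M > C(\log W)^2$, so that Lemma \ref{Lemma2} applies at scale $M$: an interval $I_\alpha = \,]\alpha M,(\alpha+1)M[$ is \emph{good} (i.e.\ satisfies \eqref{3.2}--\eqref{3.3}) with probability at least $C^{-M^2 W}$, and this event depends only on the potential values inside $I_\alpha$. Then I take $N$ a multiple of $M$ with $N > C^{M^2 W}$ and partition $[0,N]$ into $N/M$ disjoint blocks. Since the block-events are independent and each has probability at least $C^{-M^2 W} \geq (M/N)\cdot(\text{something large})$, a routine large-deviation (or even second-moment) estimate shows that with probability at least $1 - e^{-\sqrt N}$ there are at least $R = [e^{cM/10}]$ good blocks; call them $I_1,\dots,I_R$ in increasing order. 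I then \emph{fix} all the variables $(V_{ij})_{i \not\equiv 0 \,(\mathrm{mod}\ M),\, j\in\mathbb Z_W}$, on which this count depends.

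Second, with those variables fixed, I run the resolvent-identity telescoping of \eqref{3.4}--\eqref{3.9}: to pass the first barrier I peel off $\|P_{\{0\}} G_{[0,(k_1+1)M-1]} P_{\{N\}}\|$ into a product whose middle factor is the good-block bound $e^{-cM}$ from \eqref{3.3}, and whose remaining short factor $\|P_{\{0\}} G_{[0,(k_1+1)M-1]} P_{\{k_1 M\}}\|$ is controlled by the Schur-complement Lemma \ref{Lemma1}. Applying Lemma \ref{Lemma1} with $\Omega = [0,(k_1+1)M[\times\mathbb Z_W$ and $\Omega_1 = \{0,k_1 M\}\times\mathbb Z_W$ (so $|\Omega_1| = 2W$) and invoking \eqref{1.2} with $\lambda = e^{cM/2}$, one gets $\|P_{\{0\}} G_{[0,(k_1+1)M-1]} P_{\{k_1 M\}}\| < e^{cM/2}$ outside a set of measure $\lesssim W e^{-cM/2} < e^{-cM/3}$ in the fresh variables $(V_{ij})_{i\equiv 0(\mathrm{mod}\ M)}$. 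Net gain per barrier: a factor $e^{-cM/2}$. Iterating over all $R$ barriers, and handling the final tail factor $\|P_{\{(k_R+1)M\}} G_{[0,N]} P_{\{N\}}\|$ by one more application of Lemma \ref{Lemma1} to bound it by $e^{M}$, gives $\|P_0 G_{[0,N]} P_N\| < e^{-cRM/2}$ outside an exceptional set of measure at most $(R+1)e^{-cM/3} < e^{-cM/5}$.

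Third, I translate the gains back into $N$ and $W$. From $N = C^{M^2 W}$ we get $M^2 \sim (\log N)/(W\log C)$, i.e.\ $M \sim (\log N / W)^{1/2}$ up to constants, so the condition $M > (\log W)^2$ becomes $\log N > C W (\log W)^4$, which is exactly \eqref{3.12}. Then $RM \sim e^{cM/10}\cdot M \gtrsim e^{c(\log N/W)^{1/2}}$ and the measure bound $e^{-cM/5} = e^{-c(\log N/W)^{1/2}}$, giving \eqref{3.13} after relabelling $[0,N]$ as a general interval $[a,b]$ by translation invariance of the i.i.d.\ potential.

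The main obstacle I expect is the bookkeeping of \emph{which} variables each estimate consumes and verifying genuine independence at each stage: the good-barrier count uses only the ``interior'' variables $i\not\equiv 0$, while every Schur-complement step uses only the ``boundary'' variables $i\equiv 0 \,(\mathrm{mod}\ M)$, and within the latter one must check that the successive applications of Lemma \ref{Lemma1} act on disjoint coordinate blocks (the pairs $\{0,k_1M\},\{(k_1+1)M,k_2M\},\dots$) so the excluded sets add rather than compound multiplicatively. A secondary technical point is confirming that the perturbative weakening \eqref{2.11} in Lemma \ref{Lemma2} — allowing the in-block potentials to vary by $e^{-M}$ rather than be exactly constant along columns — is compatible with leaving the boundary variables free for the Schur step; this is where the $C^{-M^2 W}$ probability (as opposed to a cleaner exponential) comes from, and it must be tracked so that $N > C^{M^2W}$ still suffices for the large-deviation count.
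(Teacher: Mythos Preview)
Your proposal is correct and follows essentially the same route as the paper: you choose $M\sim(\log N/W)^{1/2}$, invoke Lemma~\ref{Lemma2} to get good barriers with probability $C^{-M^2W}$ depending only on the interior variables $i\not\equiv 0\pmod M$, use independence and a Chernoff bound to guarantee $R=[e^{cM/10}]$ barriers, then iterate the resolvent identity across the barriers while controlling the short factors via Lemma~\ref{Lemma1} in the remaining boundary variables $i\equiv 0\pmod M$, and finally convert $M$ back to $N,W$. Your anticipated obstacles are real but harmless: the Schur steps use the disjoint pairs $\{(k_{r-1}+1)M,\,k_rM\}$ so the excluded sets do simply add, and the good-block event in Lemma~\ref{Lemma2} involves only sites strictly inside $I_\alpha$, hence leaves the boundary variables genuinely free. (Minor slip: in your second paragraph the peeled factor should end at $P_{\{(k_1+1)M-1\}}$, not $P_{\{N\}}$.)
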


Starting from this statement, we perform the usual multi-scale analysis.

We use the following bootstrap lemma.

\begin{lemma}\label{Lemma4}
Let $H$ be as in Lemma \ref{Lemma3} and fix $E$.
Let $M$ be a scale, \hbox{$0<\ve, \delta<1$,} such that
\be\label{3.14}
\Vert P_a G_IP_b\Vert < e^{-\delta M}
\ee
if $I=[a, b] \subset \mathbb Z$ is an interval of size $M-1$, $G_I=(H_I-E)^{-1}$, hold with probability at least
$1-\ve$ in $V$.

Let $r\in \mathbb Z_+$ and assume further that
\be\label{3.15}
W+\frac 1\ve < c \, e^{\frac{\delta M}{4r}}
\ee
($c>0$ a constant depending on the density of the site distribution).

Take $n\in\mathbb Z_+$ such that
\be\label{3.16}
W+\frac 1\ve < n< c \, e^{\frac {\delta M}{4r}}
\ee
and set $N=n.M$.
Then \eqref{3.14} will hold at scale $N+1$ with $\ve, \delta$ replaced by
\begin{align}
\label{3.17}
\ve_1 &= 2^{-\sqrt n}+ e^{-\frac {\delta M}{2r}} \\
\delta_1&= (1-\sqrt \ve) \Big(1-\frac 1r\Big)\delta\label{3.18}
\end{align}
\end{lemma}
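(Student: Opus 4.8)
The plan is to run a single two-scale multiscale step, exactly in the spirit of the construction in \S3 (the proof of Lemma~\ref{Lemma3}), the only change being that the ``barriers'' are now dense rather than sparse. By translation invariance of the potential it suffices to treat $I=[a,a+N]$, and I would partition its columns as $I=\{a\}\cup I_1\cup\{a+M\}\cup I_2\cup\cdots\cup I_n\cup\{a+N\}$, where $I_\alpha=\,]a+(\alpha-1)M,\ a+\alpha M[\,$ is a copy of an interval of size $M-1$ and the $c_\beta=\{a+\beta M\}\times\mathbb Z_W$, $0\le\beta\le n$, are single ``interface columns''. Call $\alpha$ \emph{good} if \eqref{3.14} holds for $H_{I_\alpha}$, together with a polynomial-in-$(M,W,\varepsilon^{-1})$ bound on $\|(H_{I_\alpha}-E)^{-1}\|$ (the analogue of \eqref{3.2}, which one gets from an additional Wegner estimate and which degrades the success probability by only a constant factor). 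Since $H_{I_\alpha}$ depends only on the potential on $I_\alpha$ and the $I_\alpha$ are disjoint, the events $\{\alpha\text{ good}\}$ are independent, each of probability $\ge1-\varepsilon$. The variables on the $I_\alpha$ decide goodness; the variables on the $c_\beta$ are what we feed to Lemma~\ref{Lemma1}.

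First I would control the number of bad boxes: it is dominated by $\mathrm{Bin}(n,\varepsilon)$, so a Chernoff estimate gives
\[
\mathbb P\big[\#\{\alpha\text{ bad}\}>\sqrt\varepsilon\,n\big]\;\le\;(e\sqrt\varepsilon)^{\sqrt\varepsilon\,n}\;\le\;2^{-\sqrt n},
\]
the last step using $n>1/\varepsilon$ (hence $\varepsilon n\gtrsim1$) and $\varepsilon$ small; this produces the first term of $\varepsilon_1$ in \eqref{3.17}. Conditioning on the interior variables, we may assume there are at most $\sqrt\varepsilon\,n$ bad boxes. Next, with $\lambda:=e^{\delta M/r}$, apply Lemma~\ref{Lemma1}: once at each interface column $c_\beta$ (with $\Omega_1=c_\beta$, $|\Omega_1|=W$) to bound $\|P_{c_\beta}(H_\Omega-E)^{-1}P_{c_\beta}\|<\lambda$, and once across each maximal run of consecutive bad boxes (with $\Omega_1$ the pair of interface columns flanking that run, $|\Omega_1|=2W$) to bound $\|P_{\Omega_1}(H_\Omega-E)^{-1}P_{\Omega_1}\|<\lambda$, where $\Omega$ is the relevant restriction occurring in the telescoping below. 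This is $O(n)$ applications in all, so by \eqref{1.2} the total exceptional measure in the interface variables is
\[
\lesssim\;n\,W\,\lambda^{-1}\;=\;n\,W\,e^{-\delta M/r}\;\le\;e^{-\delta M/(2r)},
\]
where we used $n,W<c\,e^{\delta M/(4r)}$ from \eqref{3.15}--\eqref{3.16} (the density-dependent constant of Lemma~\ref{Lemma1} absorbed into $c$); this is the second term of $\varepsilon_1$.

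On the complement of these exceptional sets I would telescope, via the geometric resolvent identity, across $I$ from left to right. Along a maximal run of good boxes one peels the boxes one at a time: each peeling brings out the decay factor $e^{-\delta M}$ of \eqref{3.14}, while the ``backward'' terms it generates carry either a diagonal block at an interface column (bounded by $\lambda$) or a good-box resolvent norm (bounded polynomially), so they are $\le e^{-\delta M(1-1/r)}\cdot(\text{poly})\ll1$ and form a convergent geometric series contributing only an $O(1)$ factor per run. Across a maximal run of bad boxes one inserts the corresponding Lemma~\ref{Lemma1} bound $\le\lambda$. With $n_g\ge(1-\sqrt\varepsilon)\,n$ good boxes and $O(\sqrt\varepsilon\,n)$ bad runs, and with $\delta M$ large enough (again by \eqref{3.15}) to swallow the $O(1)$-per-run overheads, the product telescopes to
\[
\|P_a(H_I-E)^{-1}P_{a+N}\|\;\le\;e^{-\delta M\,n_g\,+\,O(\delta M\sqrt\varepsilon\,n/r)}\;\le\;e^{-\delta M n\,[(1-\sqrt\varepsilon)(1-1/r)]}\;=\;e^{-\delta_1 N},
\]
using $(1-\sqrt\varepsilon)-O(\sqrt\varepsilon/r)\ge(1-\sqrt\varepsilon)(1-1/r)$ for $\varepsilon$ small, and $\delta Mn=\delta N$ with room to spare for the passage from $N$ to $N+1$. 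Thus \eqref{3.14} holds at scale $N+1$ with $(\varepsilon_1,\delta_1)$ as in \eqref{3.17}--\eqref{3.18}.

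The main obstacle is the coupled bookkeeping of the last two steps. The Wegner level $\lambda$ is pinned from two sides: it must be large enough that the $O(n)$ interface and bad-run exceptions together total $\le e^{-\delta M/(2r)}$ --- which forces $\lambda\gtrsim nW\,e^{\delta M/(2r)}$, hence $\lambda\approx e^{\delta M/r}$, and this is precisely where the caps $n,W\lesssim e^{\delta M/(4r)}$ of \eqref{3.15}--\eqref{3.16} come from --- yet the \emph{same} $\lambda$ must be charged as a genuine loss only at the $\le\sqrt\varepsilon\,n$ bad runs, being merely absorbed into $O(1)$ factors along the good part of $I$. If one instead charged a factor $\lambda$ at every interface, the rate would drop to $(1-\sqrt\varepsilon)-1/r<(1-\sqrt\varepsilon)(1-1/r)$ and the induction would not close; keeping the losses off the good part --- which ultimately rests on $\delta\sqrt M\gg1$, so that the backward-term series converges --- is the delicate point. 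The remaining matters (aligning the size-$(M-1)$ input with the $M$-periodic partition, and picking the correct one-sided form of the geometric resolvent identity at each peeling) are routine, exactly as in the proof of Lemma~\ref{Lemma3}.
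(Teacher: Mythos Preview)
Your skeleton---the partition $I=\{a\}\cup I_1\cup\{a+M\}\cup\cdots$, the Chernoff bound on the number of bad $I_\alpha$, and the use of Lemma~\ref{Lemma1} on the interface columns---matches the paper exactly. Where you diverge is in the telescoping, and there you make the argument harder than it needs to be and in one place misidentify the mechanism.

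In the paper's proof one iterates \emph{only through the good barriers} $I_{k_1},\ldots,I_{k_R}$, exactly as in \eqref{3.4}--\eqref{3.8}. Each step consists of two exact one-sided resolvent identities: first cut at the right edge of the next good barrier $I_{k_s}$ (this is \eqref{3.4}), then peel $I_{k_s}$ off from the right of the left piece (this is \eqref{3.5}), producing the factor $\|P_{\{k_sM+1\}}G_{I_{k_s}}P_{\{(k_s+1)M-1\}}\|<e^{-\delta M}$ and a remaining factor $\|P_{\{(k_{s-1}+1)M\}}G_{[0,(k_s+1)M-1]}P_{\{k_sM\}}\|$. That remaining factor is handled by a single application of Lemma~\ref{Lemma1} with $\Omega_1=\{(k_{s-1}+1)M,\,k_sM\}\times\mathbb Z_W$ (\emph{two} columns, $|\Omega_1|=2W$), which bounds it by $e^{\delta M/r}$ except on a set of measure $\lesssim W e^{-\delta M/r}$. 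There are no ``backward terms'' and no geometric series: each step is an identity, not an expansion. After $R$ steps one gets \eqref{3.19} directly, with the exceptional set \eqref{3.20} bounded using \eqref{3.16}.

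Two consequences of this simpler scheme are worth noting. First, the definition of ``good'' is \emph{only} \eqref{3.14}; no a priori bound on $\|G_{I_\alpha}\|$ is needed (your added polynomial norm bound via a Wegner estimate is unnecessary, because the good-box resolvent enters only through its off-diagonal corner, never through its norm). Second, the factor $(1-1/r)$ in $\delta_1$ does not arise from bad runs at all: it is the price $e^{\delta M/r}$ paid at \emph{each} of the $R$ good-barrier steps, so the rate is exactly $R(1-1/r)\delta M/N\ge(1-\sqrt\varepsilon)(1-1/r)\delta$. Your concern that ``charging a factor $\lambda$ at every interface'' would give only $(1-\sqrt\varepsilon)-1/r$ reflects your box-by-box telescoping through all $n$ interfaces; the paper simply does not visit the bad stretches individually---a single two-column application of Lemma~\ref{Lemma1} jumps from the right edge of one good barrier to the left edge of the next, regardless of how many bad $I_\alpha$ lie in between.
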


\begin{proof}

We make the same construction as in the proof of Lemma \ref{Lemma3} earlier in this section, using
the same notation.
Say that $\alpha$ is `good' if $I=I_\alpha$ satisfies \eqref{3.14}.
Denote $I_1, \ldots, I_R$ the good $I_\alpha$-intervals, which only depend on the variables
$(V_{ij})_{\substack{i\not=0 (\text{mod\,} M) \\ j\in\mathbb Z_W}}$.

Since $\alpha$ is good with probability at least $1-\ve$, it follows that $R>(1-\sqrt \ve)n$ with probability at
least $1-e^{-\sqrt\ve n}$.
Proceeding exactly as in the proof of Lemma \ref{Lemma3}, we repeat the same iteration.
Thus we write \eqref{3.4}, \eqref{3.5} with in \eqref{3.5} the factor $e^{-cM}$ replaced by $e^{-\delta M}$.
The factor \eqref{3.6} is again bounded using Lemma \ref{Lemma1}, requiring this time that \eqref{3.6} is bounded
by $e^{\frac {\delta M}r}$, which will hold with probability at least $1-CW e^{-\frac {\delta M}r}$.
An $R$-fold iteration gives instead of \eqref{3.10} that
\be\label{3.19}
\Vert P_{\{0\}} G_{[0, N]} P_{\{N\}}\Vert < e^{-R(1-\frac 1r)\delta M}< e^{-(1-\sqrt \ve)(1-\frac 1r){\delta N}}
\ee
which by the preceding will hold outside an exceptional set of measure at most
\be\label{3.20}
e^{-\sqrt \ve n}+CW n e^{-\frac {\delta M}{2r}}< 2^{-\sqrt n}+ e^{-\frac {\delta M}{2r}}
\ee 
in view of assumption \eqref{3.16}.
This proves the lemma.
\end{proof}

Returning to Lemma \ref{Lemma3}, set
\be\label{3.21}
N_0=A^{W(\log W)^4}
\ee
with $A$ a sufficiently large constant (independent of $W$) and
\begin{align}\label{3.22}
\delta_0&=\frac 1{N_0} \exp \Big(\frac {\log N_0}W\Big)^{\frac 12}\\
\ve_0& =\exp \Big(-c\Big(\frac{\log N_0}W\Big)^{\frac 12}\Big).\label{3.23}
\end{align}
Thus \eqref{3.14} holds with probability at least $1-\ve_0$.
Take $r=10, n=N_0$.
Condition \eqref{3.16} will clearly hold for $A$ large enough.
According to Lemma \ref{Lemma4}, $N_1\sim n N_0=N_0^2$ will satisfy \eqref{3.14}, where, by \eqref{3.17}, \eqref{3.18}, we can take
$$
\ve_1= \frac 1{N_1} \text { and }  \delta_1= (1-\sqrt{\ve_0})\Big(1-\frac 1{10}\Big)\delta_0.
$$
A further iteration based on Lemma \ref{Lemma4} easily leads to 
$$
\begin{aligned}
N_{s+1} & =N^2_s\\
\ve_s&= \frac 1{N_s}\\
\delta_{s+1} &=(1-\sqrt{\ve_s}) (1-\frac 1{10^s}) \delta_s> \frac 12\delta_0.
\end{aligned}
$$
We obtain therefore the following amplification of Lemma \ref{Lemma3}.

\begin{lemma}\label{Lemma5}
Under the assumptions of Lemma \ref{Lemma3}, for fixed $E$ and $N>C^{W(\log W)^4}$,
\be\label{3.24}
\Vert P_a (H_I-E)^{-1} P_b\Vert < e^{-\frac 12\delta_0 N}
\ee
with
\be\label{3.25}
\delta_0 = C^{-W(\log W)^4}
\ee
holds for $I=[a, b] \subset\mathbb Z$ an $N$-interval, outside an exceptional set of measure at most $e^{-\delta_0N^{1/3}}$.
\end{lemma}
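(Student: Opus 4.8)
The plan is to treat Lemma~\ref{Lemma3} as the base of an induction on scale and to iterate the bootstrap Lemma~\ref{Lemma4}. First I would record that, by Lemma~\ref{Lemma3}, hypothesis \eqref{3.14} of Lemma~\ref{Lemma4} holds at the scale $N_0$ of \eqref{3.21} with the pair $(\varepsilon_0,\delta_0)$ of \eqref{3.22}--\eqref{3.23}. Then I would set up the recursion: given the estimate at a scale $M=N_s$ with parameters $(\varepsilon_s,\delta_s)$, apply Lemma~\ref{Lemma4} with that $M$, with $r=r_s:=10^{s+1}$, and with $n=n_s$ chosen inside the admissible window $W+\varepsilon_s^{-1}<n_s<c\,e^{\delta_sN_s/(4r_s)}$ of \eqref{3.16}; this yields the estimate at the new scale $N_{s+1}=n_sN_s$ with
$$
\varepsilon_{s+1}=2^{-\sqrt{n_s}}+e^{-\delta_sN_s/(2r_s)},\qquad
\delta_{s+1}=(1-\sqrt{\varepsilon_s})\Bigl(1-\tfrac1{r_s}\Bigr)\delta_s .
$$

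Three points then have to be verified along the chain. (i) The window in \eqref{3.16} is non-empty at every step: once $N_s\ge N_0=A^{W(\log W)^4}$ with $A$ large, $\delta_sN_s$ dominates both $\varepsilon_s^{-1}$ and $r_s$ by an enormous margin --- this is exactly what forces the threshold \eqref{3.12} --- so in fact there is a broad range of admissible $n_s$ and one retains the freedom to place $N_{s+1}$ where it is needed. (ii) The decay rate does not collapse: because $\sum_s r_s^{-1}<\infty$ and $\sum_s\sqrt{\varepsilon_s}<\infty$, the telescoped product $\delta_s=\delta_0\prod_{k<s}(1-\sqrt{\varepsilon_k})(1-r_k^{-1})$ stays above a fixed positive fraction of $\delta_0$, and after relabelling the constant (so that \eqref{3.25} holds) this furnishes the exponent $\tfrac12\delta_0$ of \eqref{3.24}. (iii) The exceptional probability shrinks fast enough: both terms of $\varepsilon_{s+1}$ are super-exponentially small in $N_s$, so the invariant $\varepsilon_s\le e^{-\delta_0N_s^{1/3}}$ propagates, which is the bound claimed in Lemma~\ref{Lemma5}.

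The one delicate point --- as opposed to routine manipulation --- is the joint bookkeeping of (i)--(iii) together with the requirement that \emph{every} scale $N>C^{W(\log W)^4}$, not merely the $N_s$ of the chain, be reached. Since \eqref{3.16} forces $n_s>\varepsilon_s^{-1}$, a small exceptional probability compels a large jump in scale, so the sequence $N_s$ grows very rapidly; the care lies in choosing the $n_s$ (and the growth of the $r_s$) so that the successive admissibility windows still overlap in scale space, and, for a general $N$, in reaching scale $N$ by one last application of Lemma~\ref{Lemma4} with $M$ equal to a suitable earlier $N_s$ and $n=\lfloor N/N_s\rfloor$, the small discrepancy $N-nN_s<N_s$ being absorbed by a trivial extra resolvent-identity step. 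Everything else --- the resolvent expansions across the barriers and the use of Lemma~\ref{Lemma1} to control the intermediate Green's functions --- is carried out exactly as in the proof of Lemma~\ref{Lemma3} above.
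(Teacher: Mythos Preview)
Your proposal is correct and follows essentially the same approach as the paper: iterate Lemma~\ref{Lemma4} starting from the base case of Lemma~\ref{Lemma3}, with $r_s$ growing geometrically so that $\prod_s(1-r_s^{-1})\prod_s(1-\sqrt{\varepsilon_s})>\tfrac12$, yielding $\delta_s>\tfrac12\delta_0$ throughout. The paper makes the concrete choices $n_s=N_s$ (so $N_{s+1}=N_s^2$) and records only the cruder $\varepsilon_s\le 1/N_s$, leaving implicit the passage to an arbitrary scale $N$ that you correctly identify as the one point requiring care.
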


In particular, this yields.

\begin{corollary}\label{Corollary6}
Let $H$ be a random $SO$ on a strip of width $W$ and site distribution of bounded density.
Then its positive Lyapounov exponents are at least
$$
C^{- W(\log W)^4}.
$$
\end{corollary}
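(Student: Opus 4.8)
The plan is to deduce the lower bound on the positive Lyapounov exponents directly from the exponential decay of the restricted Green's function established in Lemma \ref{Lemma5}, bypassing any appeal to Furstenberg's theory (in keeping with the philosophy of the Note). Recall that on the strip $\mathbb{Z}\times\mathbb{Z}_W$ the transfer matrices live in $Sp(2W)$, and the sum of the $W$ positive Lyapounov exponents $\gamma_1\geq\cdots\geq\gamma_W>0$ controls the exponential growth rate of $\|T_N(E)\|$ for a typical realization, where $T_N(E)$ is the product of the first $N$ one-step transfer matrices. So it suffices to produce a lower bound of the claimed order for the \emph{smallest} positive exponent $\gamma_W$, since all of $\gamma_1,\dots,\gamma_W$ are $\geq\gamma_W$.

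The key step is the standard translation between Green's function decay and transfer matrix growth. Fix $E$ and take $I=[a,b]$ an $N$-interval with $N>C^{W(\log W)^4}$. By Lemma \ref{Lemma5}, outside a set of measure at most $e^{-\delta_0 N^{1/3}}$ we have
$$
\Vert P_a (H_I-E)^{-1} P_b\Vert < e^{-\frac 12\delta_0 N},\qquad \delta_0 = C^{-W(\log W)^4}.
$$
The matrix elements $(H_I-E)^{-1}\big((a,j),(b,j')\big)$ can be expressed, via Cramer's rule, in terms of entries of the transfer matrix product $T_N(E)$ across $I$ (the numerator being a principal minor of $H_I-E$, the denominator $\det(H_I-E)$). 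A clean way to package this: one has the a priori bound $\|T_N(E)\|\leq e^{CN}$ deterministically (entries of the one-step matrices are bounded once we fix a compactly supported-in-distribution or moment condition — here bounded density plus the restriction that $E$ is in a fixed compact set), and combining the Green's function decay with the Thouless-type formula / the relation $\|P_a G_I P_b\|^{-1}\asymp$ (a suitable entry of $T_N$) one extracts
$$
\frac1N\log\|T_N(E)\| \;\gtrsim\; \tfrac12\delta_0
$$
on the good set. Taking $N\to\infty$ along a subsequence and using the Borel--Cantelli lemma (the exceptional measures $e^{-\delta_0 N^{1/3}}$ are summable along $N=N_0^{2^s}$, say), one concludes that for a.e. $V$ the top Lyapounov exponent $\gamma_1(E)\geq\frac12\delta_0$.

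To upgrade from $\gamma_1\geq\frac12\delta_0$ to the same bound for \emph{all} $W$ positive exponents is the main obstacle, and here is where more care is needed. One approach is to run the argument not for the scalar Green's function but for $(H_I-E)^{-1}$ acting between the full fibers $P_a$ and $P_b$ (which is exactly what Lemma \ref{Lemma5} gives — the operator norm, not just one entry), so that the decay estimate controls all singular values of the relevant block of $T_N(E)$, not merely the largest one. Concretely, $\|P_a G_I P_b\|<e^{-\frac12\delta_0 N}$ forces the corresponding $W\times W$ block of $T_N(E)^{-1}$ to have operator norm $\lesssim e^{CN - \frac12\delta_0 N}$... — more usefully, it forces the smallest singular value of the $W\times W$ propagator block to be $\gtrsim e^{\frac12\delta_0 N}$, whence by the symplectic structure (singular values of $T_N$ come in reciprocal pairs $s, s^{-1}$) every singular value $s_1\geq\cdots\geq s_W\geq 1$ satisfies $s_W\geq e^{\frac12\delta_0 N}$, i.e. $\gamma_W\geq\frac12\delta_0$. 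Alternatively, and perhaps more robustly, one iterates the Ishii--Pastur--Kotani circle of ideas: if some $\gamma_k$ with $k\leq W$ were smaller than $\delta_0$, the corresponding (almost sure) exponentially decaying solution subspace would contradict the uniform off-diagonal decay of $G_I$ on the fiber at a scale where $e^{-\gamma_k N}\gg e^{-\frac12\delta_0 N}$. Either way, the conclusion is that every positive Lyapounov exponent is bounded below by $\frac12\delta_0 = C^{-W(\log W)^4}$, which is the assertion of the Corollary (absorbing the factor $\frac12$ into the constant $C$). I expect the bookkeeping in the singular-value/symplectic-duality step — making sure the fiber-wise Green's function bound really controls the \emph{bottom} of the transfer spectrum uniformly in $N$ — to be the delicate point; the Borel--Cantelli passage to an a.s. statement is routine.
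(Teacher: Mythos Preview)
The paper does not give a proof of the Corollary beyond the phrase ``In particular, this yields'' following Lemma~\ref{Lemma5}, so your proposal is filling in an argument left to the reader. Your route~(a) is the correct one, and the passage you flag as delicate does work---but not for the reason you state.

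Concretely: for $I=[a,b]$ one has the exact block identity $P_a(H_I-E)^{-1}P_b=-\Phi_{b+1}^{-1}$, where $\Phi$ is the $W\times W$ matrix solution with $\Phi_{a-1}=0$, $\Phi_a=I_W$, and $\Phi_{b+1}$ is precisely the top-left $W\times W$ block of the transfer matrix $T_{[a,b]}\in Sp(2W)$. Lemma~\ref{Lemma5} therefore gives $s_{\min}(\Phi_{b+1})>e^{\frac12\delta_0 N}$ on the good set. The step from here to $s_W(T_{[a,b]})>e^{\frac12\delta_0 N}$ is \emph{not} a consequence of the symplectic pairing of singular values; symplecticity only tells you that the $2W$ singular values of $T$ occur in reciprocal pairs, it says nothing about how the singular values of a block compare to those of the full matrix. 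What is actually needed is the elementary domination $s_i(A)\le s_i(M)$ for any submatrix $A$ of $M$ (since $A=R_1MR_2^{\,T}$ with $\|R_j\|\le 1$), applied with $A=\Phi_{b+1}$, $M=T_{[a,b]}$, $i=1,\dots,W$. With this in hand, Borel--Cantelli over $N$ (the exceptional measures $e^{-\delta_0 N^{1/3}}$ are summable) and Oseledets give $\gamma_W(E)\ge\tfrac12\delta_0$ for every fixed $E$, which is the Corollary after absorbing constants. Your alternative~(b) via Ishii--Pastur--Kotani is vague and unnecessary here, and the a~priori bound $\|T_N\|\le e^{CN}$ you invoke is neither available under the stated hypothesis (bounded density does not bound the potential) nor used anywhere in the argument.
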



\begin{thebibliography} {99}

\bibitem[B-L]{B-L} A.~Bougerol, J.~Lacroix, {\emph Products of random matrices with applications to Schr\"odinger operators}

\bibitem[K-L-S1]{K-L-S1}
A.~Klein, J.~Lacroix, A.~Speis,
{\emph Localization for the Anderson model on a strip with singular potentials},
J.~Funct. Anal. 94 (1990), no 1, 135--155.

\bibitem[K-L-S2]{K-L-S2} A.~Klein, J.~Lacroix, A.~Speis,
{\emph Regularity of the density of states in Anderson model on a strip for
potentials with singular continuous distributions} J.~Statist. Phys. 57 (1989),
no. 1-2, 65--88.

\bibitem[S]{S} J.~Schenker, {\emph Eigenvector localization for random band matrices with power law  band width},
CMP 290, 1065--1097 (2009).

\bibitem[S-B]{S-B} H.~Schulz-Baldes, {\emph Perturbation theory for Lyapounov exponents of an Anderson model on a strip}, GAFA, Vol. 14 (2004), 1089--1117.
\end{thebibliography}
\end{document}